\newcommand{\bra}[1]{\ensuremath{\langle #1 |}}
\newcommand{\ket}[1]{\ensuremath{| #1 \rangle}}
\newcommand{\inprod}[2]{\ensuremath{\langle #1 \mid #2 \rangle}}
\newtheorem{theorem}{Theorem}
\definecolor{myurlcolor}{rgb}{0,0,0.4}
\definecolor{mycitecolor}{rgb}{0,0.5,0}
\definecolor{myrefcolor}{rgb}{0.5,0,0}
\begin{document}

\title{Quantum theory realises all joint measurability graphs}
\author{Chris Heunen} \affiliation{Department of Computer Science, University of Oxford}
\author{Tobias Fritz} \affiliation{Perimeter Institute for Theoretical Physics}
\author{Manuel L. Reyes} \affiliation{Department of Mathematics, Bowdoin College}
\date{\today}
\pacs{
03.65.Ta, 
03.65.Ca, 
03.67.-a, 
02.10.Ox 
}

\begin{abstract}
  Joint measurability of sharp quantum observables is determined pairwise, and so can be captured in a graph. We prove the converse: any graph, whose vertices represent sharp observables, and whose edges represent joint measurability, is realised by quantum theory. 
  This leads us to show that it is not always possible to use Neumark dilation to turn unsharp observables into sharp ones with the same joint measurability relations,
  highlighting a caveat in the ``church of the larger Hilbert space''.	
\end{abstract}

\maketitle

\subsection*{Introduction}

One of the characteristic features of quantum theory is that not every two observables can be measured jointly.
This raises the question: what rules govern the relationship of joint measurability between quantum observables?
In this article, we prove that we can label the vertices of any given graph with sharp quantum observables in such a way that two observables are jointly measurable precisely when their vertices are connected by an edge. 
This leads us to a shortcoming of the methodology of ``the church of the larger Hilbert space'', which holds that any quantum operation can be regarded as unitary evolution of a larger, \emph{dilated}, system, and in particular that any unsharp quantum observable can be regarded as a sharp one on a dilated system.
The caveat is that dilation does not respect joint measurability.

The latter result is important to be aware of for quantum information theorists, whose bread and butter is dilation~\cite{keyl:quantuminformation};
in particular, unsharp quantum observables are used in 
quantum state discrimination~\cite{ivanovic:statediscrimination,dieks:statediscrimination,peres:statediscrimination},
photonic qubit measurement~\cite{brandt:povms},
quantum state tomography~\cite{scott:sicpovms},
quantum cryptography~\cite{applebydangfuchs:sicpovms},
and remote state preparation~\cite{killoranetal:remotestatepreparation}.
The former result is of foundational interest in its own right.
Joint measurability plays a pivotal role in \emph{contextuality}, the phenomenon that the result of measuring an observable depends on which other observables it is measured jointly with.
It has given rise to Gleason's theorem~\cite{busch:gleason}, 
Bell's inequalities~\cite{bell:inequality,CHSH:inequality,wolfperezgarciafernandez:jointmeasurability},
the Kochen--Specker theorem~\cite{kochenspecker:hiddenvariables},
Hardy's paradox~\cite{hardy:paradox},
GHZ impossibility results~\cite{GHZ:bell},
  and generalised probabilistic theories~\cite{barrett:probabilistic,popescurohrlich:boxes}.
All of these are under active study, see \textit{e.g.}~\cite{liangspekkenswiseman:specker,abramskyhardy:bell,abramskybrandenburger:sheaves}\footnote{Abramsky and Brandenburger~\cite{abramskybrandenburger:sheaves} derive abstract Kochen--Specker, Bell, Hardy, and GHZ results ``without any presupposition of quantum mechanics''. Our results could be interpreted as strengthening this approach by showing that it fully captures such ``characteristic mathematical structures of quantum mechanics, such as complex numbers, Hilbert spaces, operator algebras, or projection lattices'', after all. 
See especially Section~7.1, which discusses~\cite{lovaszsaksschrijver:orthogonal}. That paper has similar results as this article, but with orthogonality instead of joint measurability, requiring extra conditions.}.
In particular, there are (non)contextuality inequalities that are violated by quantum mechanics and hence can be used to experimentally detect quantum effects~\cite{klyachkocanbinicioglushumovsky:inequality}, that come from graph theory~\cite{cabelloseveriniwinter:graphs,fritzleverriersainz:contextuality}.

\subsection*{Realisation as yes-no questions}

Let $G$ be a graph. Write $v,w,x,y, \ldots \in G$ for its vertices, and $v \sim w$ when $v$ and $w$ are connected by an edge. By convention, we agree that $v \sim v$ for any vertex $v$.
Think of the vertices as observables, that are jointly measurable precisely when they are connected by an edge.

We will be concerned with several kinds of observables: all will be particular types of structures on a Hilbert space, but what joint measurability means will vary. By a \emph{realisation} of $G$ as observables on a Hilbert space $H$, we mean a function $x \mapsto O_x$ that sends vertices to observables in such a way that $O_x$ and $O_y$ are jointly measurable if and only if $x$ and $y$ are connected by an edge.
As the basic step, we will first consider \emph{yes-no questions}, that is, \emph{projections}. A set of projections is defined to be jointly measurable when each pair in it commutes. We now prove that any graph is realisable as projections on some Hilbert space.


\begin{theorem}\label{thm:yesno}
  Any graph has a realisation as projections on some Hilbert space.
\end{theorem}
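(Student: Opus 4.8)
The plan is to build the Hilbert space as a direct sum with one two-dimensional summand for each pair of distinct non-adjacent vertices, and to encode the failure of commutativity for each such pair on its own summand. Concretely, fix two non-commuting rank-one projections on $\mathbb{C}^2$, say $A = \ket{0}\!\bra{0}$ and $B = \ket{+}\!\bra{+}$, and set
\[
  H = \bigoplus_{\{x,y\} : x \neq y,\ x \not\sim y} \mathbb{C}^2,
\]
the sum ranging over the (unordered) non-edges of $G$.

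For each vertex $v$ I would define a projection $P_v$ on $H$ summand by summand: on the summand indexed by a non-edge $\{x,y\}$, let $P_v$ act as $A$ if $v = x$, as $B$ if $v = y$, and as $0$ otherwise. Since a direct sum of projections is again a projection, each $P_v$ is a well-defined projection on $H$, and $v \mapsto P_v$ is the candidate realisation. (A vertex adjacent to all others then receives the zero projection, which is harmless: it commutes with everything, matching the fact that such a vertex is jointly measurable with all others.)

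It remains to check that $P_v$ and $P_w$ commute precisely when $v \sim w$, and commutativity can be tested summand by summand. If $\{v,w\}$ is a non-edge, then on its own summand $P_v$ and $P_w$ act as $A$ and $B$, which do not commute, so $P_v P_w \neq P_w P_v$. Conversely, if $v \sim w$ (including $v = w$), then no summand is indexed by $\{v,w\}$, so on every summand at least one of $P_v, P_w$ acts as $0$; hence they commute on each summand and therefore commute on $H$.

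The one point that needs care --- and the crux of the argument --- is that on a summand indexed by $\{x,y\}$, a vertex $v \notin \{x,y\}$ must act as an operator commuting with both $A$ and $B$, for otherwise it could spuriously fail to commute with $P_x$ or $P_y$ there. This is exactly why I choose $0$: since two non-commuting projections generate $M_2(\mathbb{C})$, the only operators commuting with both $A$ and $B$ are scalar multiples of the identity, so the only projections available in their joint commutant are $0$ and $I$. Choosing $0$ keeps every irrelevant vertex out of the way on that summand while isolating each non-commuting pair on its own copy of $\mathbb{C}^2$, which is what makes the summand-by-summand bookkeeping go through.
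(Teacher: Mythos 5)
Your proposal is correct and follows essentially the same route as the paper's own proof: a direct sum of copies of $\mathbb{C}^2$ indexed by the non-edges, with the non-commuting pair $\ket{0}\bra{0}$, $\ket{+}\bra{+}$ placed on the two endpoints of each non-edge and $0$ for all other vertices, followed by a summand-by-summand commutativity check. Your closing observation that the joint commutant of $A$ and $B$ forces the choice of $0$ (or $I$) for bystander vertices is a nice justification the paper leaves implicit, but the arguments are otherwise identical.
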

\begin{proof}
First, consider the special case of a graph $G_{v,w}$ where all pairs of vertices are connected by an edge, except for two fixed vertices $v,w$ that are not connected.
Fix two projections on $\mathbb{C}^2$ that do not commute, for example:
\[
  \ket{0}\bra{0} = \begin{pmatrix} 1 & 0 \\ 0 & 0 \end{pmatrix},
  \quad
  \ket{+}\bra{+} = \tfrac{1}{2} \begin{pmatrix} 1 & 1 \\ 1 & 1 \end{pmatrix}.
\]
We can use these to build a realisation of $G_{v,w}$ as projections on $\mathbb{C}^2$.
Define $p_v = \ket{0}\bra{0}$, $p_w = \ket{+}\bra{+}$, and $p_x=0$ for all other vertices $x \neq v,w$.
By construction, all pairs $p_x$ and $p_y$ for vertices $x,y \in G_{v,w}$ commute, except for $p_{v}$ and $p_{w}$. Hence $x \mapsto p_x$ realises $G_{v,w}$ as projections on $\mathbb{C}^2$. We will denote the dependency on $v$ and $w$ of this realisation by writing $p_x^{v \not\sim w}$ for $p_x$.

Now that we know how to obstruct a single pair of vertices from being jointly measurable, we return to an arbitrary graph $G$. 
Let the Hilbert space $H = \bigoplus_{v \not\sim w} \mathbb{C}^2$ be the direct sum of copies of $\mathbb{C}^2$, where the direct sum ranges over all pairs of vertices that are not connected by an edge.
For any vertex $x \in G$, then $p_x = \bigoplus_{v \not\sim w} p_x^{v \not\sim w}$ gives a well-defined projection on $H$~\cite{heunen:ltwo}.
Now, if $x \sim y$, then all $p_x^{v \not\sim w}$ and $p_y^{v \not\sim w}$ commute by construction, and so $p_x$ and $p_y$ commute.
Similarly, if $x \not\sim y$, then $p_x^{x \not\sim y}$ and $p_y^{x \not\sim y}$ do not commute, and so $p_x$ and $p_y$ do not commute.
All in all, we have constructed a realisation $x \mapsto p_x$ of $G$ as projections on $H$.
\end{proof}

If $f \colon G_1 \to G_2$ is an injective function between graphs satisfying $f(v) \not\sim_2 f(w)$ when $v \not\sim_1 w$, then the realisations are related by $p_{x} = V^\dag p_{f(x)} V$ for an isometry $V$.

\subsection*{Dimension bounds for yes-no questions}

There is a well-defined \emph{minimal} dimension in which a graph with $V$ vertices can be realised as projections.
The construction in the proof of Theorem~\ref{thm:yesno} showed that this minimal dimension is at most $2N$, where $N$ is the number of non-edges, \textit{i.e.}\  pairs of vertices that are not connected by an edge.
Notice that Theorem~\ref{thm:yesno} makes sense for graphs of arbitrary size; if the graph is infinite, then the number $N$ should be regarded as a cardinal number. In particular, the theorem implies that finite graphs can be realised as projections on a finite-dimensional Hilbert space, namely in dimension $2N$.
Clearly $N\leq\tfrac{|G|(|G|-1)}{2}$, so that the minimal dimension is at most $|G|(|G|-1)$; this inequality is saturated for graphs without edges, for which $N=\tfrac{|G|(|G|-1)}{2}$.

We will now show that the minimal dimension that any graph can be realised in is at most $|G|$.

\begin{theorem}\label{thm:bounds}
  Any graph has a realisation as projections on a Hilbert space whose dimension is at most $|G|$, the number of vertices of $G$.
\end{theorem}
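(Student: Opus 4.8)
The plan is to build the projections from carefully chosen unit vectors, exploiting the elementary fact that two rank-one projections $\ket{\psi}\bra{\psi}$ and $\ket{\phi}\bra{\phi}$ onto unit vectors commute if and only if $\psi$ and $\phi$ are either orthogonal or parallel. Before using this, I would dispose of the infinite case, where nothing new is needed: if $G$ is infinite then the number $N$ of non-edges satisfies $N \le \binom{|G|}{2} = |G|$ as cardinals, so the Hilbert space $\bigoplus_{v \not\sim w}\mathbb{C}^2$ from the proof of Theorem~\ref{thm:yesno} already has dimension $2N \le |G|$. The real content is therefore the finite case, where the block construction only yields the quadratic bound $2N$ and must be improved to $|G|$.

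So suppose $G$ has $n = |G|$ vertices. By the commutation fact, it suffices to produce unit vectors $\{\psi_x\}_{x \in G}$ in a space of dimension at most $n$, no two of them parallel, such that $\inprod{\psi_x}{\psi_y} = 0$ exactly when $x \sim y$ for $x \ne y$; then $p_x = \ket{\psi_x}\bra{\psi_x}$ realises $G$, with the convention $x \sim x$ matching $[p_x,p_x]=0$. Prescribing such vectors is the same as prescribing their Gram matrix. Let $B$ be the adjacency matrix of the complement of $G$, so that $B_{xy} = 1$ precisely at non-edges ($x \not\sim y$ with $x \ne y$) and $B_{xy}=0$ otherwise, and set $M = I + tB$ for a small real $t>0$. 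Its diagonal is $1$, while off the diagonal $M_{xy} = t \ne 0$ exactly at non-edges and $M_{xy}=0$ exactly at edges — precisely the orthogonality pattern required — and since $|M_{xy}| = t < 1$ off the diagonal, no two of the resulting unit vectors can be parallel.

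It remains to choose $t$ so that $M$ is a genuine Gram matrix, i.e.\ positive semidefinite; this spectral step is the only real obstacle, and for finite graphs it is easily cleared. Since $B$ is real symmetric, its eigenvalues are real and bounded below by some $-\lambda$ with $\lambda \ge 0$, so any $t$ with $0 < t < 1/\lambda$ (any $t>0$ when $\lambda = 0$) makes $M = I + tB$ positive definite. A positive definite $n \times n$ matrix with unit diagonal is the Gram matrix of $n$ unit vectors spanning $\mathbb{C}^n$, for instance the columns of $M^{1/2}$, since $\inprod{M^{1/2}e_x}{M^{1/2}e_y} = M_{xy}$; these are the desired $\psi_x$, living in dimension $n = |G|$. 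Setting $p_x = \ket{\psi_x}\bra{\psi_x}$ then gives a realisation of $G$ as projections in dimension at most $|G|$, completing the argument.
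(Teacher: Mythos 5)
Your proof is correct, but it builds the vectors differently than the paper does. Both arguments rest on the same key lemma — rank-one projections commute precisely when the underlying vectors are orthogonal or parallel — so both reduce the theorem to producing unit vectors whose orthogonality pattern is exactly the edge pattern of $G$, with no two vectors parallel. The paper constructs these vectors explicitly and combinatorially: in a space of dimension $|G|+N$ with orthonormal basis indexed by vertices and non-edges, it sets $\ket{\psi_x} = \ket{x} + \sum_{x \not\sim v} \ket{\{x,v\}}$, computes the inner products directly (edges give $0$, non-edges give $1$, and strict Cauchy--Schwarz rules out parallelism), and then restricts to the span of the $\psi_x$ to get dimension at most $|G|$. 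You instead prescribe the Gram matrix $M = I + tB$, where $B$ is the adjacency matrix of the complement, and use a spectral perturbation argument to make $M$ positive definite for small $t>0$, taking the columns of $M^{1/2}$ as the vectors. Your route lands directly in dimension $|G|$ with no restriction step and is the standard construction of a faithful orthogonal representation; the paper's route is more elementary (no spectral theorem or matrix square roots, only Cauchy--Schwarz) and fully explicit. One small slip in your write-up: the condition $t<1$, which you use to exclude parallel vectors, is not literally implied by your stated constraint $0 < t < 1/\lambda$ when $\lambda < 1$. This is harmless, though, for once $M$ is positive definite the vectors $\psi_x = M^{1/2}e_x$ are linearly independent and hence automatically pairwise non-parallel (equivalently, the $2\times 2$ principal minors of a positive definite matrix with unit diagonal force $|M_{xy}|<1$), so no separate condition on $t$ is needed. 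Your cardinal-arithmetic treatment of the infinite case coincides with the paper's.
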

\begin{proof}
  If $|G|$ is an infinite cardinal number, then it equals $|G|(|G|-1)$, and the claim follows from the above considerations.

  We may therefore assume that the graph is finite. 
  Consider the Hilbert space of dimension $|G|+N$, with orthonormal basis vectors $\ket{x}$ for each vertex $x \in G$ and $\ket{\{v,w\}}$ for each non-edge $v \not\sim w$. For each vertex $x \in G$, define a vector $\ket{\psi_x} =  \ket{x} + \sum_{x \not\sim v} \ket{\{x,v\}}$,
  where the sum ranges over all vertices $v$ not adjacent to $x$.
  For distinct vertices $x$ and $y$ then
  \[
    \inprod{\psi_x}{\psi_y} = \begin{cases} 0, & x \sim y, \\ \inprod{\{x,y\}}{\{y,x\}} = 1, & x \not\sim y. \end{cases}
  \]
  Thus $\ket{\psi_x}$ and $\ket{\psi_y}$ are orthogonal when $x \sim y$, but not orthogonal or parallel when $x \not\sim y$ (because $\inprod{\psi_x}{\psi_y}^2 = 1 < 2 \cdot 2 \leq \inprod{\psi_x}{\psi_x} \inprod{\psi_y}{\psi_y}$).

  Letting $p_x$ be the projection onto $\ket{\psi_x}$ constructs a realisation $x \mapsto p_x$ as projections.
  Finally, notice that each $p_x$ has rank 1. So we may restrict the Hilbert space down to just the linear span of the $|G|$ vectors $\ket{\psi_x}$.
  This restricts the realisation $x \mapsto p_x$ to a Hilbert space of dimension at most $|G|$.
\end{proof}

The construction in the proof relied on the fact that projections onto single vectors commute precisely when the vectors are parallel or orthogonal. This is closely related to \emph{orthogonal representations} of graphs, which have been studied in the literature~\cite[Sec.~9.3]{lovasz:geometricrepresentations}. For example, if the complement of the graph is connected after removing any $V-d-1$ vertices, then one can assign unit vectors in $\mathbb{R}^d$ to the vertices such that all these vectors are different, and two vectors are orthogonal if and only they share an edge. In general, if we insist that the projections $p_x$ have rank one, then the minimal dimension in which the complement of the ``path'' graph 
\[\xymatrix@1{\bullet \ar@{-}[r] & \bullet \ar@{-}[r] & \cdots \ar@{-}[r] & \bullet}\]
can be realised is $|G|-1$~\cite{hogben:minimumrank,boothetal:minimumrank}. In that sense, Theorem~\ref{thm:bounds} is very close to being optimal. We leave open the question of whether allowing $p_x$ to have higher rank can lead to more efficient realisations.

\subsection*{Realisation as sharp observables}

The above results easily extend from yes-no questions to \emph{sharp observables}, that is, \emph{projection valued measures} (PVMs). A PVM is a set $P$ of mutually orthogonal projections that sum to 1. A family $P_1,P_2,\ldots$ of PVMs is jointly measurable when $p$ and $q$ commute for all $p \in P_i$ and $q \in P_j$ and all $i,j$~\cite{buschgrabowskilahti:operational}. Hence a specification of sharp quantum observables and which ones are jointly measurable is determined pairwise, and can also be captured in a graph. 

\begin{theorem}\label{thm:pvm}
  Any graph has a realisation as PVMs on a Hilbert space whose dimension is at most the number of vertices.
\end{theorem}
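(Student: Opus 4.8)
The plan is to bootstrap Theorem~\ref{thm:pvm} directly from Theorem~\ref{thm:bounds}, by turning each projection into a two-outcome PVM. Given a graph $G$, first apply Theorem~\ref{thm:bounds} to obtain a realisation $x \mapsto p_x$ as projections on a Hilbert space $H$ of dimension at most $|G|$, where $p_x$ and $p_y$ commute if and only if $x \sim y$. The natural construction is then to associate to each vertex $x$ the PVM $P_x = \{p_x, 1 - p_x\}$, the sharp observable that asks the yes-no question $p_x$.

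The key observation making this work is that joint measurability of the two-outcome PVMs $P_x$ and $P_y$ reduces to commutation of the underlying projections. By definition, $P_x$ and $P_y$ are jointly measurable exactly when every element of $P_x$ commutes with every element of $P_y$, i.e.\ when $p_x$, $1 - p_x$ each commute with $p_y$, $1 - p_y$. Here I would note that $1 - p_x$ commutes with $q$ if and only if $p_x$ does, since $[1 - p_x, q] = -[p_x, q]$, and likewise on the other factor. Hence all four commutation conditions collapse to the single condition that $p_x$ and $p_y$ commute. Therefore $P_x$ and $P_y$ are jointly measurable precisely when $p_x$ and $p_y$ commute, which by the realisation property of Theorem~\ref{thm:bounds} happens precisely when $x \sim y$. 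This shows $x \mapsto P_x$ is a realisation of $G$ as PVMs, and it lives on the very same Hilbert space $H$, so the dimension bound of at most $|G|$ is inherited immediately.

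I expect no substantial obstacle here: the argument is essentially a bookkeeping reduction, and the only thing to be careful about is that the \emph{if and only if} in the definition of realisation survives the passage from projections to PVMs in both directions. The forward direction ($x \sim y \Rightarrow$ jointly measurable) and the reverse direction ($x \not\sim y \Rightarrow$ not jointly measurable) both follow cleanly from the equivalence of the four-fold commutation condition with $[p_x, p_y] = 0$, so nothing is lost. One minor degenerate point worth checking is the reflexive convention $x \sim x$: since $P_x$ trivially commutes with itself, this is consistent, and the edge case of $p_x = 0$ or $p_x = 1$ (a trivial PVM) causes no difficulty because such a PVM commutes with everything, matching a vertex adjacent to all others. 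Thus the entire proof amounts to invoking Theorem~\ref{thm:bounds} and verifying this single commutation equivalence.
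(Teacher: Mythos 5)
Your proposal is correct and follows exactly the paper's own proof: replace each projection $p_x$ from Theorem~\ref{thm:bounds} by the two-outcome PVM $P_x = \{p_x, 1-p_x\}$ on the same Hilbert space, and observe that joint measurability of $P_x$ and $P_y$ reduces to commutation of $p_x$ and $p_y$. The only difference is that you spell out the elementary commutator identity $[1-p_x,q]=-[p_x,q]$, which the paper leaves implicit.
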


\begin{proof}
  Given a graph $G$ with vertices $x,y,\ldots$, simply replace the projection $p_x$ of Theorem~\ref{thm:bounds} by the PVM $P_x = \{p_x,1-p_x\}$: the PVMs $P_x$ and $P_y$ are jointly measurable if and only if $p_x$ and $p_y$ commute.
\end{proof}


In the joint measurability graph of all projections on a Hilbert space, a special role is played by \emph{maximal cliques}: maximal sets of vertices, every two of which are connected by an edge. They correspond to PVMs $P$ that are maximally fine-grained, in the sense that all $p \in P$ have rank one. More precisely, given such a PVM $P$, the set of all projections commuting with all $p \in P$ form a maximal clique.
Conversely, a maximally fine-grained PVM can be recovered as the minimal projections in a maximal clique.\footnote{Given a maximal clique of projections in a Hilbert space, the C*-algebra it generates is commutative. Therefore it has a commutative projection lattice. By maximality, this lattice coincides with the clique, which is therefore a Boolean sublattice of the full projection lattice.}

It is not always possible to realise a graph as projections in a way that sends maximal cliques to PVMs. For a counterexample, consider the ``fork'' graph with three vertices and two edges,
\[\xymatrix@C-1ex@R-5ex{
  y && z \\
  & x \ar@{-}[ur] \ar@{-}[ul] 
}\]
Suppose there were a realisation as projections with $p_x+p_y=1=p_x+p_z$. Then $p_y = 1-p_x = p_z$, making $p_x$ and $p_y$ commute, contradicting the fact that $y \not\sim z$.
We leave open the interesting question of characterising which graphs can be realised as projections in a way that sends maximal cliques to PVMs.


We call a realisation as projections $x \mapsto p_x$ \emph{faithful} when distinct vertices $x \neq y$ give rise to distinct projections $p_x \neq p_y$.
The previous example might have given pause to the reader who intuitively expected a realisation as projections of a graph to be
faithful. 
The construction of Theorem~\ref{thm:yesno} might not be faithful, because
vertices $x \in G$ that are connected to all others end up being realised by the projection $p_x=0$ commuting with anything. Any realisation as projections can be made faithful as follows. 
Enlarge the Hilbert space to $H \oplus H'$, where $H'$ has orthonormal basis $\{ |x \rangle \mid x \in G\}$, and send $x \in G$ to $p_x \oplus \ket{x}\bra{x}$.
This is clearly faithful, and has the same commutativity properties as the original realisation.


We can similarly extend to realisations as sharp quantum observables that are \emph{not dichotomic}.
If the vertices $x \in G$ are labeled with numbers $n_x \geq 2$, we can realise the graph as PVMs such that $P_x$ has $n_x$ elements. Enlarge the Hilbert space to $H \oplus \bigoplus_{x \in G} H_x$, where $H_x$ has orthonormal basis $\{ |3_x\rangle, \ldots, |n_x\rangle\}$, and send $x \in G$ to 
$P_x = \{ p_x \oplus 0, (1-p_x) \oplus 0 \} \cup \{ 0 \oplus |i \rangle \langle i | \mid i=3,\ldots,n_x\}$. This has the same commutativity properties as the original realisation.


In principle, one could imagine physical theories in which joint measurability of observables is not determined pairwise. (Indeed, we will see shortly that unsharp observables in quantum mechanics form a case in point.) To model joint measurability, we then have to generalise to \emph{hypergraphs}, in which a hyperedge can connect any number of vertices~\cite{cabelloseveriniwinter:graphs,fritzleverriersainz:contextuality}. Any graph induces a hypergraph, where a set of vertices forms a hyperedge when every two vertices in it are connected by an edge. Our definition of \emph{realisability} easily carries over to hypergraphs: vertices still represent observables, and a set of vertices forms a hyperedge precisely when it is jointly measurable. Combining the above results with the well-known fact that sharp observables are jointly measurable when they commute~\cite{kraus:states,ludwig:foundations},
we obtain the following characterisation: a hypergraph is realisable as sharp quantum observables if and only if it is induced by a graph.

Just as we have discussed dimension bounds for the realisations of graphs by projections, we can also ask what the minimal dimension is to realise a given graph by PVMs. As the proof of Theorem~\ref{thm:pvm} shows, any realisation as projections can be turned into a realisation as PVMs, and hence the PVM minimal dimension is at most the projection minimal dimension. As witnessed by the multitude of proofs of the Kochen--Specker theorem in $\mathbb{C}^3$ and $\mathbb{C}^4$~\cite{pavicicetal:graphapproach}, there is quite a lot of ``room'' already in these low dimensions, and and one may wonder whether this is already enough to realise every graph as a PVM. This turns out not to be the case.

\begin{theorem}
There is no dimension $d$ in which all graphs can be realised as PVMs.
\end{theorem}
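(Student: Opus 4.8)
The plan is to prove this by a counting argument rather than by exhibiting one specific hard graph: for each fixed $d$ I will show that the number of graphs on $n$ vertices admitting a realisation as PVMs on $\mathbb{C}^d$ is far too small to account for all labelled graphs on $n$ vertices once $n$ is large. Theorem~\ref{thm:pvm} already guarantees that \emph{every} individual graph is realisable, so the content is entirely that the required dimension must grow. Since a realisation on a Hilbert space of dimension at most $d$ can be padded to one on $\mathbb{C}^d$ — adjoin a summand and feed its identity into one outcome of each PVM, which changes no commutation relation — it suffices to bound realisations on $\mathbb{C}^d$ itself.

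First I would set up a parameter space of bounded complexity. Encode a PVM on $\mathbb{C}^d$ with at most $d$ outcomes as an ordered tuple $(q_1,\dots,q_d)$ of matrices in $M_d(\mathbb{C})$ satisfying the degree-$\le 2$ relations $q_m=q_m^\dag$, $q_m^2=q_m$, $q_m q_{m'}=0$ for $m\neq m'$, and $\sum_m q_m=1$; zero outcomes are permitted, so the encoding has a fixed shape independent of the graph. Recording matrix entries, a choice of $n$ PVMs $P_1,\dots,P_n$ is a point of $\mathbb{R}^N$ with $N=O(nd^3)$. Crucially, joint measurability of $P_i$ and $P_j$ is \emph{exactly} the vanishing of every commutator $[q^{(i)}_m,q^{(j)}_{m'}]$, whose entries are polynomials of degree $2$ in the parameters; so which pairs are jointly measurable is completely determined by the zero/nonzero pattern of a system of $s=O(n^2 d^4)$ real polynomials of degree at most $2$ in these $N$ variables.

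Next I would invoke the classical Milnor--Thom-type estimates (Warren's theorem, or the zero-pattern bounds of R\'onyai, Babai and Ganapathy): the number of distinct zero/nonzero patterns of $s$ polynomials of degree at most $\delta$ in $N$ variables is at most $(C s\delta/N)^{N}$. Substituting the parameters above yields at most
\[
  2^{\,O\!\left(n d^3 \log(nd)\right)}
\]
realisable patterns, and hence at most this many labelled graphs on $\{1,\dots,n\}$ realisable on $\mathbb{C}^d$; this is a genuine upper bound because distinct labelled graphs have distinct edge sets, hence distinct joint-measurability patterns. For fixed $d$ this count is $2^{O_d(n\log n)}$, which is $o\!\left(2^{\binom{n}{2}}\right)$, the total number of labelled graphs on $n$ vertices. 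Thus, once $n$ is large enough (depending on $d$), some graph on $n$ vertices has no realisation as PVMs on $\mathbb{C}^d$; and as $d$ was arbitrary, no single dimension works.

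I expect the main obstacle to be the careful bookkeeping that makes the semialgebraic bound applicable: one must fix the shape of the parameter space once and for all (allowing zero and repeated outcomes) so that its complexity depends only on $d$ and $n$ and not on the graph, keep the polynomial degrees uniformly bounded, and cite the estimate in the form that counts zero/nonzero patterns rather than strict sign conditions. Once these are in place the conclusion is immediate from $n\log n = o(n^2)$; the only genuinely delicate point is verifying that the exponent stays subquadratic in $n$ for fixed $d$.
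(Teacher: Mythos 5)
Your proposal is correct, but it takes a genuinely different route from the paper's. The paper argues constructively: for each $d$ it exhibits a concrete graph, namely a clique of $B_d+1$ ``action'' vertices (where $B_d$ is the $d$th Bell number) together with $\lceil\log_2(B_d+1)\rceil$ ``control'' vertices whose adjacencies encode distinct bitstrings, and derives a contradiction from two elementary facts: distinct neighbourhoods force distinct PVMs, while pairwise commuting PVMs on a $d$-dimensional space are simultaneously diagonalisable and hence correspond to partitions of $\{1,\dots,d\}$, of which there are only $B_d$; pigeonhole finishes the proof. You instead count: encoding $n$-tuples of PVMs on $\mathbb{C}^d$ as points of a parameter space whose shape depends only on $n$ and $d$, and observing that the joint-measurability graph is a function of the zero/nonzero pattern of $O(n^2d^4)$ quadratic polynomials in $O(nd^3)$ real variables, the R\'onyai--Babai--Ganapathy bound caps the number of realisable labelled graphs at $2^{O(nd^3\log(nd))}$, which loses to $2^{\binom{n}{2}}$ once $n$ is large. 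Your bookkeeping is sound: the padding reduction, the allowance of zero outcomes so that exactly $d$ outcomes always suffice, the restriction of attention to commutators (which, for PVMs, exactly characterise joint measurability), and the insistence on zero-pattern rather than strict sign-pattern bounds are precisely the right precautions, and counting patterns over all of $\mathbb{R}^N$ only overestimates the patterns attained on the PVM variety. As for what each approach buys: the paper's proof is elementary and self-contained and produces an explicit counterexample graph, but that graph has roughly $B_d$ vertices, so it only forces the realising dimension to grow like $\log n/\log\log n$ in the number $n$ of vertices; your argument is non-constructive and leans on a nontrivial external theorem, but it yields a stronger quantitative conclusion --- in fixed dimension $d$, asymptotically almost all $n$-vertex graphs are non-realisable, and the dimension needed to realise all $n$-vertex graphs must grow at least like $(n/\log n)^{1/3}$, polynomially rather than logarithmically in $n$.
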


\begin{proof}
For a given $d$, we construct a graph which cannot be realised in dimension $d$ as follows. Let $B_d$ be the number of partitions of $\{1,\ldots,d\}$; this is the $d$th Bell number. Now take a graph with $B_d+1$ vertices designated as ``action'' vertices and $n:=\lceil \log_2(B_d+1) \rceil$ many ``control'' vertices. Enumerate the action vertices using bitstrings of length $n$. Then, action vertex $v$ shares an edge with control vertex number $k$ if and only if the $k$-th bit in the bitstring associated to $v$ is $1$. Also, every two action vertices share an edge, while two control vertices may or may not share an edge.

This graph cannot be realised in dimension $d$: since every action vertex is connected to a different set of control vertices, no two action vertices can map to the same PVM. On the other hand, all these PVMs must be jointly measurable, and hence all their elements can be diagonalised in the same basis. In this fixed basis, every PVM therefore corresponds to a partition of $\{1,\ldots,d\}$. But since we have $B_d+1$ many PVMs, which is higher than the number of partitions of $\{1,\ldots,d\}$, this is impossible.
\end{proof}

\subsection*{Unsharp observables and Neumark dilation}

We now turn to the most general kind of \emph{(unsharp) quantum observables}, namely \emph{positive operator valued measures} (POVMs). These are defined as functions $E$ from some \emph{outcome space} $I$ to operators on a Hilbert space that are bounded between $0$ and $1$ and form a resolution of the identity\footnote{While we only consider \emph{discrete} POVMs here, all our results hold unabated for positive-operator valued measures on the Borel sets on a compact Hausdorff space, by reformulating them in the language of C*-algebras and completely positive maps; see \textit{e.g.}~Theorem~4.6 of~\cite{paulsen:completelypositive}.}: $\sum_{i \in I} E(i) = 1$, and $0 \leq E(i) \leq 1$ for each $i \in I$.
If $E(i)$ is a projection for each $i$, we actually have a PVM. Therefore we may also write $P(i)$ instead of $p_i$ for PVMs $P=\{p_i \mid i \in I\}$. 
A family of POVMs $E_1,E_2,\ldots$ is defined to be jointly measurable when there exists a joint POVM $E$ of which they are the marginals: if POVMs $E_n$ have outcome space $I_n$, then $E$ should have outcome space $\prod_n I_n$ and satisfy
\begin{align*}
  E_1(i_1) = \sum_{i_2 \in I_2, i_3 \in I_3, \ldots} E(i_1,i_2,i_3,\ldots), \\
  E_2(i_2) = \sum_{i_1 \in I_1, i_3 \in I_3, \ldots} E(i_1,i_2,i_3,\ldots),
\end{align*}
and so on~\cite{buschgrabowskilahti:operational, heinosaarireitznerstano:jointmeasurability}. This reduces to the previously considered notions of joint measurability for yes-no questions and sharp quantum observables.


Neumark's famous \emph{dilation} theorem says that any POVM can be dilated to a PVM on a larger Hilbert space,
or in other words, that any POVM is the \emph{compression} of a PVM on a larger Hilbert space: 
if $E$ is a POVM on a Hilbert space $H$ with outcome space $I$, then there exist a Hilbert space $K$, an isometry $V \colon H \to K$, and a PVM $P$ on $K$ with outcome space $I$, such that $E(i) = V^\dagger P(i) V$~\cite{buschgrabowskilahti:operational,paulsen:completelypositive}.
This forms an important part of the philosophy that John Smolin called ``the church of the larger Hilbert space'', which holds that one need not care about unsharp observables as long as ancilla spaces are taken into account.


There is an extension of Neumark's dilation theorem for families of observables. We call a family $E_1,E_2,\ldots$ of POVMs, with outcome spaces $I_1,I_2,\ldots$, on a Hilbert space $H$ \emph{jointly dilatable} when there exist a Hilbert space $K$, an isometry $V \colon H \to K$, and a single PVM $P$ with outcome space $\prod_n I_n$ such that $E_n(i) = \sum_j V^\dagger P(i,j) V$, where $j$ ranges over $\prod_{m \neq n} I_m$, and we write $(i,j)$ for the obvious element\footnote{That is, for $i \in I_n$ and $j \in \prod_{m \neq n} I_m$, the element $(i,j) \in \prod_n I_n$ has $n$th component $i$ and other
components given by the components of $j$.} of $\prod_m I_m$. It is now a matter of unfolding definitions to prove that a family of POVMs on a Hilbert space is jointly measurable if and only if it is jointly dilatable.


We now study how joint measurability behaves under Neumark dilation.
Suppose POVMs $E_1$, $E_2$, and $E_3$ are compressions of PVMs $P_1$, $P_2$, and $P_3$
with respect to different isometries.
If $\{E_1,E_2,E_3\}$ are jointly measurable there is a \emph{single} isometry $V_{123}$ that dilates the joint POVM $E_{123}$ (to, say, $P_{123}$).
But if the $E_i$ are merely pairwise jointly measurable, then there exist PVMs $P_{ij}$ and three isometries $V_{ij}$ that dilate $E_{ij}$ to $P_{ij}$.
What we will show is that even if one has all three pairwise dilations $P_{ij}$ via isometries $V_{ij}$ at hand, it may be the case that there is no triplewise dilation $P_{123}$ via any isometry $V_{123}$. 
Thus Neumark dilation cannot always turn unsharp observables into sharp ones with the same joint measurability relations.
In this sense, Neumark dilation does not reflect joint measurability.

\begin{theorem}\label{thm:povms}
  There is a family $\{E_n\}$ of POVMs on a Hilbert space $H$ 
  that does not allow 
  an isometry $V \colon H \to K$ and
  a family of PVMs $\{P_n\}$ (with the same outcome spaces as $E_n$) on $K$ 
  with $E_n(i) = V^\dag P_n(i) V$
  in such a way that a subset of $\{E_n\}$ is jointly measurable if and only if 
  the corresponding subset of $\{P_n\}$ is jointly measurable. 
\end{theorem}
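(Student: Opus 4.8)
The plan is to exploit the sharp contrast between how joint measurability is determined for PVMs versus POVMs. For PVMs it is a \emph{pairwise} condition: a family of PVMs is jointly measurable exactly when its members pairwise commute, so for any family of PVMs an arbitrary subset is jointly measurable if and only if each pair within it is. In other words, the joint measurability hypergraph of a PVM family is always induced by a graph. It therefore suffices to exhibit a family of POVMs whose joint measurability hypergraph is \emph{not} graph-induced, and the smallest candidate is a triple $E_1,E_2,E_3$ that is pairwise jointly measurable but not jointly measurable as a whole: its hypergraph has all three two-element hyperedges but not the three-element one, and hence is not induced by any graph.

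Next I would realise such a triple on a single qubit $H = \mathbb{C}^2$ using the unsharp Pauli observables
\[ E_n(\pm) = \tfrac{1}{2}\bigl(1 \pm \eta\, \sigma_n\bigr), \qquad n \in \{x,y,z\}, \]
with a common sharpness $\eta \in (0,1]$, where $\sigma_x,\sigma_y,\sigma_z$ are the Pauli matrices. By the known joint measurability criteria for unbiased qubit effects, two of these orthogonal observables are jointly measurable precisely when $2\eta^2 \le 1$, while all three are jointly measurable precisely when $3\eta^2 \le 1$. Any $\eta$ with $\tfrac{1}{\sqrt{3}} < \eta \le \tfrac{1}{\sqrt{2}}$ then yields a family that is pairwise but not triplewise jointly measurable.

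The argument closes by contradiction. Suppose there were an isometry $V$ and PVMs $P_1,P_2,P_3$ dilating the $E_n$ and matching their joint measurability structure. Since each pair $\{E_m,E_n\}$ is jointly measurable, each pair $\{P_m,P_n\}$ would be jointly measurable too, that is, $P_m$ and $P_n$ would commute. But pairwise-commuting PVMs are automatically jointly measurable as a whole, forcing $\{P_1,P_2,P_3\}$, and hence $\{E_1,E_2,E_3\}$, to be jointly measurable, contradicting the choice of $\eta$.

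I expect the main obstacle to be establishing the two thresholds for the unsharp Pauli observables. The pairwise bound requires either Busch's criterion $|\vec{a}+\vec{b}| + |\vec{a}-\vec{b}| \le 2$ for the Bloch vectors, or an explicit construction of a joint POVM for each pair; the triplewise failure requires an impossibility argument showing that no joint POVM exists once $3\eta^2 > 1$. These are the genuinely quantitative parts, and once they are in hand the structural contradiction above is immediate.
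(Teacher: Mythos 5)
Your proposal is correct and follows essentially the same route as the paper: exploit the fact that joint measurability of PVMs is determined pairwise, take a triple of qubit POVMs that is pairwise but not triplewise jointly measurable (the paper cites this ``hollow triangle'' example from the literature, which is exactly your noisy Pauli family $E_n(\pm) = \tfrac{1}{2}(1 \pm \eta\,\sigma_n)$ with $\tfrac{1}{\sqrt{3}} < \eta \le \tfrac{1}{\sqrt{2}}$), and derive the same contradiction. The only difference is that you make the counterexample and its thresholds explicit where the paper defers to references, which is a matter of presentation rather than of mathematical substance.
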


\begin{proof}
Perhaps the simplest counterexample starts with a family $\{E_1,E_2,E_3\}$ of POVMs on the Hilbert space $H=\mathbb{C}^2$, every pair of which is jointly measurable, but which is not jointly measurable itself~\cite{kraus:states,ludwig:foundations,heinosaarireitznerstano:jointmeasurability,wolfperezgarciafernandez:jointmeasurability,liangspekkenswiseman:specker}. Its hypergraph is a ``hollow triangle''.
\[\xymatrix@C-1ex@R-3ex{
  E_2 \ar@{-}[rr] && E_3\\
  & E_1 \ar@{-}[ur] \ar@{-}[ul] 
}\]
In other words, this (hyper)graph is realisable as POVMs.

In contrast, as noted above,
joint measurability of PVMs is determined pairwise, which will lead us to a contradiction. 
Suppose PVMs $\{P_1,P_2,P_3\}$ as in the statement of the theorem existed.
Then, by our assumptions,
the pairwise joint measurability of the $E_n$ would imply pairwise joint measurability of the $P_n$, so the $P_n$ would necessarily be triplewise jointly measurable as well.
In other words, then the $\{E_n\}$ would be (triplewise) jointly dilatable.
But this contradicts the fact that the $\{E_n\}$ are not (triplewise) jointly measurable.
In summary: joint measurability of the
putative $P_n$ would imply
joint measurability of the $E_n$, since a joint POVM can be constructed as the compression of a joint PVM.
\end{proof}

We could interpret the previous theorem as a warning against an unreflected belief in ``the church of the larger Hilbert space''. If you care about (non-)joint measurability of observables, you cannot simply ignore unsharp quantum observables in favour of their dilated sharp observables, even if ancilla spaces are taken into account, and you have to take the unsharpness involved seriously. 

This plays a role in quantum protocols that rely on unsharp observables that are not jointly measurable, in which case the usual analysis by dilation to sharp observables should not be used.
For example, \cite{brandt:povms} explicitly constructs a PVM implementation of a POVM and mentions that this ``faithfully represents the POVM''. However, 
PVM implementations cannot always represent joint measurability relations within families of POVMs. We suspect that it may be possible to turn this apparent problem into a \emph{feature} which can be exploited in new quantum information protocols. More concretely, we imagine situations in which a number of parties share some quantum information resource, but only certain subgroups of these parties are allowed joint access to it.

\begin{acknowledgments}
Thanks to Rob Spekkens for discussions on joint dilatability and the proof of Theorem~\ref{thm:povms}, and to David Roberson for pointing out an issue with an earlier proof of Theorem~\ref{thm:bounds}.
The first author was supported by the Engineering and Physical Sciences Research Council.
Research at Perimeter Institute is supported by the Government of Canada through Industry Canada and by the Province of Ontario through the Ministry of Economic Development and Innovation. The second author has been supported by the John Templeton Foundation.
\end{acknowledgments}

\bibliographystyle{IEEEtran}  
\bibliography{pairwise}

\end{document}